\documentclass[english,13pt]{article}
\usepackage[T1]{fontenc}
\usepackage[latin1]{inputenc}
\usepackage{geometry}
\usepackage{float}
\usepackage{mathrsfs}
\usepackage{amsmath}
\usepackage{amssymb}
\usepackage{graphicx}
\usepackage{hyperref}
\hypersetup{
    colorlinks=true,
    linkcolor=blue,
    filecolor=magenta,      
    urlcolor=cyan,
    citecolor = red,
}

\makeatletter

\usepackage{algorithm,algpseudocode}

\usepackage{amsthm}

\usepackage{mathrsfs}

\usepackage{amsfonts}

\usepackage{epsfig}

\usepackage{bm}

\usepackage{mathrsfs}

\usepackage{enumerate}

\numberwithin{equation}{section}

\@ifundefined{definecolor}{\@ifundefined{definecolor}
 {\@ifundefined{definecolor}
 {\usepackage{color}}{}
}{}
}{}

\usepackage{subfig}\usepackage[all]{xy}

\newtheorem{theorem}{Theorem}[section]

\newcounter{hypA}
\newenvironment{hypA}{\refstepcounter{hypA}\begin{itemize}
  \item[({\bf A\arabic{hypA}})]}{\end{itemize}}
\newcounter{hypB}

\newcounter{hypD}

\newcounter{hypW}

\usepackage{babel}\date{}

\usepackage{babel}

\makeatother

\usepackage{babel}

\begin{document}

\begin{center}

{\Large \textbf{Martingale Posteriors for Discretely Observed Diffusions}}

\vspace{0.5cm}

BY  JINGNING YAO,  AJAY JASRA \& SHENG JIANG

{\footnotesize School of Data Science,  The Chinese University of Hong Kong,  Shenzhen,  Shenzhen, CN.}\\
{\footnotesize E-Mail:\,} \texttt{\emph{\footnotesize  jingningyao@link.cuhk.edu.cn; 
ajayjasra@cuhk.edu.cn; jiangsheng@cuhk.edu.cn
}}

\end{center}

\begin{abstract}
In this paper we consider parameter estimation for discretely observed diffusion processes.  In particular,  we focus on data that are observed at low frequency and methodology that can estimate parameters with uncertainty quantification.  Most statistical work in this domain develops advanced Markov chain Monte Carlo (MCMC) algorithms for sampling from the posterior of the parameters,  a task which is often complicated by the fact that one seldom has access to the transition density of the diffusion process; one has to combine sophisticated MCMC methods which are robust to the required time discretization of the diffusion, which can yield expensive algorithms.   We focus on developing the martingale posterior method \cite{fong} for the context of interest,  when one can only numerically approximate the transition density of the diffusion.  Based on using types of diffusion bridges (e.g.~\cite{schauer,vd_meulen_guided_mcmc}) we introduce a new martingale posterior method for parameter estimation for discretely observed diffusion processes.  We prove that this algorithm approximates,  in some sense,  the martingale posterior which has no time-discretization bias up-to $\mathcal{O}(\Delta)$ if $\Delta$ is the time discretization step.  Our approach is illustrated on several examples,  showing orders of magnitude speed up versus state-of-the-art MCMC algorithms.
\\
\noindent\textbf{Keywords}:  Uncertainty Quantification, Diffusion Processes, High and Low Frequency Observations.
\end{abstract}

\section{Introduction}

We are given a diffusion process for $X_0\in\mathbb{R}^d$ given:
\begin{equation}\label{eq:diff}
dX_t = \mu_{\theta}(X_t)dt + \sigma_{\theta}(X_t) dW_t
\end{equation}
where $X_t\in\mathbb{R}^d$,  $\theta\in\Theta=\mathbb{R}^{d_{\theta}}$,   $\mu_{\theta}:\mathbb{R}^d\rightarrow\mathbb{R}^d$,  
$\sigma_{\theta}:\mathbb{R}^d\rightarrow\mathbb{R}^d$ and $\{W_t\}_{t\geq 0}$ is standard Brownian motion.
We are given access to regularly observed data associated to the diffusion (e.g.~$x_1,\dots,x_T$) and we want to estimate $\theta$ with access to uncertainty quantification,  for example  in a Bayesian manner. We assume the model is uniformly elliptic and that there exists a Lebesgue transition density whose gradient in $\theta$ also exists.  
Diffusion models find numerous applications in real scientific problems such as finance,  econometrics,  biology and physics;  we refer the reader to the works \cite{bridges,beskos,beskos_ea,golightly,non_synch,roberts} and the references therein for further descriptions.  

When one seeks to estimate the parameters $\theta$,  with some measure of uncertainty quantification,  it is often the case that many researchers use a Bayesian perspective.  That is, to place a prior on $\theta$ and seek to compute expectations w.r.t.~the posterior distribution of the parameters conditional upon the observed data.  In most cases of practical interest this has two major issues, which are that (1) the posterior density is known,  at best, up-to a normalizing constant and (2) indeed the un-normalized posterior is only known up-to an approximation of the posterior density.  The latter is that whilst under fairly mild conditions \eqref{eq:diff} admits an almost sure unique solution and under further assumptions a Lebesgue transition density,  this density is not known and can (often) only be approximated,  in some way,  using numerical time-discretization methods,  such as Euler-Maruyama (e.g.~\cite{kloeden}).  Then one uses the approximated transition density to compute an un-normalized (approximation) of the posterior density and this is used as part of a Markov chain Monte Carlo (MCMC) algorithm.  By now there are numerous examples of such MCMC algorithms of which a non-exhaustive list includes \cite{bridges,golightly,jasra,non_synch,roberts,schauer,vd_meulen_guided_mcmc}.  These MCMC algorithms can be quite sophisticated,  using a combination of MCMC,  sequential Monte Carlo and sometimes even multilevel Monte Carlo \cite{giles,ml_rev}.

In terms of application,  the afore-mentioned MCMC algorithms can be quite expensive,  in that they constitute many complicated approaches and hence require specific codes that can be challenging to develop,  even with AI tools.  The methods also require long training and run periods which can be prohibative,  especially when the observed time series is quite long.  The objective of this work is to investigate faster methods for parameter estimation for diffusions with uncertainty quantification and specifically how they can be developed and implemented in a coherent manner.  The particular methodology that we focus upon is that of the martingale posterior \cite{cui,fong,mart1}, see also later work \cite{fong1,mgpn,ng},   who present an approach for uncertainty quantification based upon constructing an appropriate martingale.  This martingale can be quickly simulated forwards in time (based on model assumptions) and relying on the well-known martingale convergence theorem: there exists a limiting (in time) random variable. This limiting random variable can be simulated  using the afore-mentioned forward simulations.  In many experiments the authors \cite{cui,fong,mart1} show promising results in terms of fast simulation,  especially relative to MCMC.  The perspective of this paper is to focus on the score-based martingale methodology of \cite{cui} and develop new and novel approaches to implement it for diffusions. This latter method focusses upon the $\theta-$gradient of log of the observation density (the score).
We remark that this paper is not a philosophical one from the statistical perspective; we assume that the reader is convinced at the utility of the martingale posterior method and we then proceed to develop the said ideas for diffusion models. 

In this article we focus on low frequency observations,  that is,  that are separated by $\mathcal{O}(1)$ time.  In such a scenario,  it is known that, if one needs access to the transition density of the diffusion  as we do,  one must carefully construct an approximation of the density as we now explain.
If one considers the Euler-Maruyama time discretization,  then one can show that (for example) the approximated transition density over unit time is an integral over the time-steps of the diffusion between observation times.  It is well known that if say the time step is $\Delta_l=2^{-l}$ it can often be the case that an estimate of the transition density explodes exponentially fast in $l$; see for example \cite{durham} and we explain this point carefully in Section \ref{sec:method}.  In this paper the problem is even more accute as one needs access to the score function.
In this paper,  based on the works in \cite{schauer,vd_meulen_guided_mcmc} and its extensions \cite{bridges,beskos,non_synch} we develop a new score-based martingale posterior method for diffusions that are observed at low frequencies.  The main reason why one can escape the exponential variance explosion alluded to above,  is that one can, in principle,  define a martingale posterior in true continuous time.  Of course, in practice, this martingale posterior cannot be simulated and must be time discretized.  We then prove,  under mathematical assumptions that the expected square $L_2-$distance between the martingale posterior in continuous and discrete time is $\mathcal{O}(\Delta_l)$ hence confirming that the exponential variance in $l$ is avoided.   Finally we illustrate our methodology in several examples,  showing the improvement in terms of computational time versus some state-of-the-art MCMC algorithms.  

This article is structured as follows.  
In Section \ref{sec:review} we give a short review of the score-based martingale methodology which we shall adapt for diffusions.
In Section \ref{sec:method} we describe our methodology.  In Section \ref{sec:theory} our theoretical result are presented.  In Section \ref{sec:numerics} we investigate our methodology for several diffusions.  In Section \ref{sec:sum} the article is concluded and several avenues for future work are discussed. The mathematical proof of our main result and its assumptions are housed in Appendix \ref{app:theory}.

\section{Review}\label{sec:review}

\subsection{Basic Score-Based Martingale Posterior Approach}

Our approach is based on the ideas in \cite{cui} which we now recapitulate for completeness.  For each parameter $\theta\in\Theta=\mathbb{R}^{d_{\theta}}$,  let $F_{\theta}$ be a distribution  on $(\mathbb{R}^d,\mathscr{B}(\mathbb{R}^d))$ with $\theta-$differentiable Lebesgue density $f_{\theta}$.  Let $\theta_0\in\Theta$ be given and define the recursion
\begin{equation}\label{eq:sbm_rec}
\theta_k = \theta_{k-1} + \gamma_k \nabla \log\left\{f_{\theta_{k-1}}(X_k)\right\}
\end{equation}
where for each $k\geq 1$:
$$
X_k|\theta_0, \theta_1,x_1,\dots,\theta_{k-1},x_{k-1} \sim F_{\theta_{k-1}}
$$
with $x_0$ the null vector, 
$\gamma_k\in\mathbb{R}^+$ and $\sum_{k\geq 1} \gamma_k = \infty$, $\sum_{k\geq 1} \gamma_k^2< \infty$ is the step-size.  We remark that in this paper we only consider parameters on $\mathbb{R}^{d_{\theta}}$,  that is,  on unconstrained spaces.  

Denote by $\mathbb{E}$ the expectation operator associated to the afore-mentioned process $\theta_0, \theta_1,X_1,\dots$.
Set $\mathscr{F}_0$ as the trivial sigma field and for $k\geq 1$, 
$\mathscr{F}_k$ as the $\sigma-$field generated by $(\theta_0, \theta_1,X_1,\dots,\theta_k,X_k)$ if one can establish that
$\sup_{k\geq 0}\mathbb{E}[|\theta_k|]<+\infty$ ($|\cdot|$ is the $L_1-$norm for vectors) 
and that 
$$
\mathbb{E}\left[\sup_{\theta\in\Theta}\left|\nabla \log\left\{f_{\theta}(X_k)\right\}\right|\right] <+\infty
$$
then one has that for any $k\geq 1$
\begin{eqnarray*}
\mathbb{E}\left[\theta_k|\mathscr{F}_{k-1}\right] & = & \theta_{k-1} + \gamma_k
\mathbb{E}\left[\nabla \log\left\{f_{\theta_{k-1}}(X_k)\right\}|\mathscr{F}_{k-1}\right] \\
& = & \theta_{k-1}
\end{eqnarray*}
where we have assumed that it is legitimate to reverse the order of integration and differentiation and used the well-known property of the expectation of the score function.   We have thus established that $(\theta_k)_{k\geq 0}$ is a martingale relative to $(\mathscr{F}_k)_{k\geq 0}$.   Then by the martingale convergence theorem 
(e.g.~\cite[Section 1.3]{hall})
there exists a random variable $\theta^{\star}$ that is integrable (i.e.~$\mathbb{E}[|\theta^{\star}|]<+\infty$) such that we have,  almost surely
\begin{equation}\label{eq:mart_conv}
\lim_{k\rightarrow\infty}\theta_k  = \theta^{\star}.
\end{equation}
We note that, writing the $j^{\text{th}}-$element of a $d_{\theta}-$vector $x$ as $x^{(j)}$
$j\in\{1,\dots,d_{\theta}\}$ and supposing $\mathbb{E}[(\theta_0^{(j)})^2]<+\infty$
\begin{eqnarray*}
\mathbb{E}\left[\left(\theta^{\star,(j)}\right)^2\right]^{\tfrac{1}{2}} & = & \mathbb{E}\left[
\left(
\theta_{0}^{(j)} + \sum_{p=1}^{\infty}\gamma_k\nabla \log\left\{f_{\theta_{k-1}}(X_k)\right\}^{(j)}
\right)^2\right]^{\tfrac{1}{2}} \\
& \leq & C\left(1+
\mathbb{E}\left[\sum_{p=1}^{\infty}\gamma_p^2\left(\nabla \log\left\{f_{\theta_{k-1}}(X_k)\right\}^{(j)}\right)^2\right]
^{\tfrac{1}{2}}\right)
\end{eqnarray*}
where $C<+\infty$ is a finite constant.
Then if 
$$
\mathbb{E}\left[\sup_{\theta\in\Theta}\left(\nabla \log\left\{f_{\theta}(X_k)\right\}^{(j)}\right)^2\right] < C
$$
for each $j\in\{1,\dots,d_{\theta}\}$ and some $C<+\infty$ one has that 
$$
\mathbb{E}\left[\left(\theta^{\star,(j)}\right)^2\right]^{\tfrac{1}{2}} <+\infty
$$
as $\sum_{k\geq 1}\gamma_k^2<+\infty$.  This is enough to show that  $\theta^{\star}$ has finite variance.

We can interpret \eqref{eq:mart_conv} as follows: if one runs the recursion \eqref{eq:sbm_rec} indefinitely, then under the stated assumptions, there is a limiting random variable which can be used to represent an uncertainty about the model parameter.  In practice one does not know this random variable,  but for many interesting models,  one can sample \eqref{eq:sbm_rec} for a long time horizon  and many times; the resulting collection of samples can be used an approximation of $\theta^{\star}$.
As has been discussed in \cite{cui,fong} this is not only an embarrassingly parallel sampling method,  but is cheaper than many modern simulation techniques such as MCMC.  We refer the reader to the afore-mentioned references for the justification method, beyond the mathematical rationale that we have given. 

\subsection{Initializing the Process}

The approach just discussed does not leverage any observed data in any way,  which is clearly not a sensible idea in practice.  In this article we consider an initial training phase,  which can also be used to set the step-size (see below) by running the recursion \eqref{eq:sbm_rec},  except one does not sample the data and uses the observed data in its place; this was adopted by \cite{cui}.  We note that a sequence of parameters generated in such a manner \emph{would not} necessarily produce a martingale sequence as we now explain.  We remark however,  neither does this matter as the proposed idea is simply used to obtain $\theta_0$; the martingale property can be gained from the susbequent simulation of the recursion \eqref{eq:sbm_rec}.

Suppose,  for exposition purposes,  the data have been generated from model with the parameter also generated from some model,  say the generated value is $\theta^*$; write the associated expectation operator as $\mathbb{E}_{\theta^*}$ and the $\sigma-$field that has been generated by the data and parameter up-to data-point $k$ as $\mathscr{G}_k$ (set $\mathscr{G}_0=\mathscr{F}_0$).  Now if one considers the conditonal expection
\begin{eqnarray}
\mathbb{E}_{\theta^*}\left[\theta_k|\mathscr{G}_{k-1}\right] & = & \theta_{k-1} + \gamma_k
\mathbb{E}_{\theta^*}\left[\nabla \log\left\{f_{\theta_{k-1}}(X_k)\right\}|\mathscr{G}_{k-1}\right] \nonumber\\
& = & \theta_{k-1} + \int_{\mathbb{R}^d} \nabla \log\left\{f_{\theta_{k-1}}(x_k)\right\} f_{\theta^*}(x_k)dx_k\label{eq:no_mart_exp}
\end{eqnarray}
one would not have a martingale in general,  even if the second term on the right hand side of \eqref{eq:no_mart_exp} is finite.  It may be possible to obtain a super or sub martingale,  depending on the properties of $\nabla \log\left\{f_{\theta_{k-1}}(x_k)\right\}$ but this is not useful here. 
As stated above this lack of martingale property is not necessarily an issue as we initialize the algorithm using this scheme and moreover one can use this phase to train the step-size adaptively as is often used in stochastic gradient/ascent methods.  It should be stated if the realized data are assumed to be generated from the process described in \eqref{eq:sbm_rec} one would of course obtain the martingale property,  but as we have stated it is not needed and is  computationally prohibitive.

\subsection{On the Step-Size and Pre-Conditioners}

In much of our preliminary investigations we have found that the step-size is an important determining factor in the associated limiting random variable $\theta^{\star}$.  The link between stochastic gradient ascent and \eqref{eq:sbm_rec} is relatively clear,  so it makes sense that one should try and utilize ideas from that literature to enhance the score-based martingale approach.   One method is the notion of using a pre-conditioner (e.g.~\cite{li,xli}):
$$
\theta_k = \theta_{k-1} + \gamma_k P_k\nabla \log\left\{f_{\theta_{k-1}}(X_k)\right\}
$$
for some $d_{\theta}\times d_{\theta}$ matrix $P_k$.  When available,  at least in convex optimization,  one often uses
$P_k=H_{\theta_{k-1}}^{-1}$ where $H_\theta = \nabla^2 \log\{f_{\theta}(x)\}$.  This choice is unavailable to us,  in the sense that even if the hessian $H_\theta$ is available,  it would destroy the martingale property.  We note that \cite{cui} suggest using the square root of the Fisher information matrix as a pre-conditioner,  but again this is seldom available.
In some contexts e.g.~\cite{xli},  one can try adaptively learn a pre-conditioner and this could be used in the data learning phase of the algorithm.  In this article we focus on learning a `good' step-size in the afore-mentioned phase and this is discussed in our numerical examples.   We note the parallel with stochastic optimization also suggests a wide variety of strategies (such as sample averaging) that could be investigated for the score-based martingale approach,  but our focus is on deriving new algorithms for diffusions and this is what we describe now.

\section{Score-Based Martingales for Diffusions}\label{sec:method}

\subsection{Assumptions}

We now return to the diffusion process \eqref{eq:diff} and we will assume the following.
\begin{itemize}
                \item[(i)] For each $\theta\in\Theta$,  
\begin{itemize}
\item{each element of $\sigma_{\theta}$, is twice continuously differentiable
with bounded derivatives and globally Lipschitz}
\item{$\Sigma_{\theta}(x)=\sigma_{\theta}(x)\sigma_{\theta}(x)^{\top}$ is uniformly elliptic.}
\end{itemize}
                \item[(ii)] For each $\theta$,  each element of $\mu_{\theta}$ is twice continuously differentiable with bounded derivatives and globally Lipschitz.
        \end{itemize}
We note that one could relax several assumptions,  but for ease of exposition we do not do this.  Assumptions of these type have been used for several MCMC methods such as \cite{bridges,ub_grad,non_synch} and some discussion is given there.  We remark that the uniformly elliptic assumption allows us to use fairly simple diffusion approximations (such as the Euler-Maruyama method) and in general most of the approximation methods would work well.   The assumptions also ensure the existence of Lebesgue transition density which,  between two times $0<s<t$
and for any fixed $(x_s,x_t)\in\mathbb{R}^{2d}$ is denoted as
as $f_{\theta,s,t}(x_t|x_s)$.  Without further discussion it is assumed that the gradient vector in $\theta$ of
$\log\left\{f_{\theta,s,t}(x_t|x_s)\right\}$ is well-defined. 

\subsection{Exact Method}

We will now consider a type of score-based martingale method in the following context.  We shall suppose access to
observations $x_0,x_1,\dots,x_T$ of the diffusion process.  The notation is such  that the data are observed at regular and unit times,  however,  this convention is made for presentational purposes only.  The approaches to be detailed can handle data at a collection of observation times $t_0,t_1,\dots,t_T$ say with $0\leq t_0<t_1<\cdots<T_T$,  which includes irregular observation times and in principle high-frequency observation times.  We note,  however,  that the subsequent ideas are developed precisley for the case of low-frequency data,  i.e.~$\mathcal{O}(1)$ time between each data point;  the high frequency case is discussed in Section \ref{sec:sum} as part of future work.

As we now consider a time dependent process,  as opposed to an i.i.d.~type process in the previous section,  one must be more careful in order to construct a martingale.   The initial phase of our approach is as follows:
\begin{enumerate}
\item{Set $\theta_0\in\Theta$.}
\item{For $k\in\{1,\dots,T\}$,  set
$$
\theta_k = \theta_{k-1} + \gamma_k \nabla \log\left\{f_{\theta,k-1,k}(x_k|x_{k-1})\right\}.
$$
}
\end{enumerate}
In practice,  within this phase,  one could use pre-conditioners or adaptive methods to learn the step-size.  As we intend to analyze this `vanilla' method mathematically,  this is not added to the exposition here.   In terms of random variables,  implicitly, there is \emph{some} data generating process for $X_0,\dots,X_T$,  for instance the exact diffusion process with some given $\theta^*\in\Theta$.  We do not make any assumption about this process other than what will be needed below. 

Let $(\theta,x)\in\Theta\times\mathbb{R}^d$,  $0<s<t$ be given and denote by $F_{\theta,s,t}(\cdot|x)$ as the distribution associated to $f_{\theta,s,t}(\cdot|x)$.
The next phase is run for `long-time',  so that we do not specify a stopping time,  is generated as follows. 
\begin{enumerate}
\item{Input $\theta_T$ and $X_{T}$ from the first phase. Set $k=T+1$.}
\item{Sample $X_{k}|x_{k-1},\theta_{k-1}\sim F_{\theta_{k-1},k-1,k}(\cdot|x_{k-1})$.}
\item{Update
$$
\theta_k = \theta_{k-1} + \gamma_k \nabla \log\left\{f_{\theta,k-1,k}(x_k|x_{k-1})\right\}
$$
and $k=k+1$.  Go to the start of 2..}
\end{enumerate}
To construct our martingale,  we specify that $\mathscr{F}_T$ is the $\sigma-$field that is generated by
$(\theta_0,X_0,\dots,\theta_T,X_T)$ and for any $t\geq T+1$ we set
 $\mathscr{F}_t$ as the $\sigma-$field that is generated by $(\theta_0,X_0,\dots,\theta_t,X_t)$.    Write the expectation associated to the law of the process described in phases 1 and 2 as
$\mathbb{E}$ and suppose that 
\begin{eqnarray*}
\sup_{k\geq 0}\mathbb{E}[|\theta_k|] & <& +\infty \\
\sup_{k\geq 1}\mathbb{E}\left[\sup_{\theta\in\Theta}\left|\nabla \log\left\{f_{\theta,k-1,k}(X_k|X_{k-1})\right\}\right|\right] & < & +\infty
\end{eqnarray*}
then we have that $(\theta_k)_{k\geq T}$ is a martingale relative to $(\mathscr{F}_k)_{k\geq T}$.  One is then in a position to consider uncertainty quantification associated to the parameters of the diffusion.  

\subsubsection{Problems with the Exact Method}\label{sec:td_issues}

The main problem of using the exact method is the fact that one needs to be able to evaluate 
$\nabla\log\left\{f_{\theta,s,t}(x_t|x_s)\right\}$  and sample from the exact transition law of the diffusion process.  
This is only possible for a very small class of diffusions such as the Ornstein-Uhlenbeck process.
In terms of exact simulation of diffusions there are several methods in the literature (e.g.~\cite{beskos_ea,blanchet})  but typically these are only
useful if $d$ (the dimension of the diffusion) is small or one does not need to repeatedly simulate the diffusion.
One could also seek to use the exact simulation methods to produce unbiased estimates of $\nabla\log\left\{f_{\theta,s,t}(x_t|x_s)\right\}$,  but this would again suffer the same problem as just mentioned.  

One of the main approaches to deal with the main issues of the exact method is to adopt time-discretizations,  to approximate the law of the diffusion process.  Let $l\in\mathbb{N}$ be given and set $\Delta_l=2^{-l}$ then one of the most simple time discretization methods is the Euler-Maruyama approach for $(k,j)\in\mathbb{N}\times\{0,\dots,\Delta_l^{-1}-1\}$:
\begin{equation}\label{eq:em_approx}
X_{k-1+(j+1)\Delta_l} = X_{k-1+j\Delta_l} + \mu_{\theta}(X_{k-1+j\Delta_l})\Delta_l + 
\sigma_{\theta}(X_{k-1+j\Delta_l})\left[W_{k-1+(j+1)\Delta_l}-W_{k-1+j\Delta_l}\right].
\end{equation}
In such a scenario,  one can sample this process,  assuming one can evaluate the drift and diffusion coefficients and the transition density over time unit $\Delta_l$ is known.   Moreover,  the convergence of the scheme as $l\rightarrow+\infty$ has been well studied and we refer the reader to \cite{kloeden} and the references therein. 

The basic idea is then to replace the exact simulation of the diffusion with the Euler-Maruyama simulation.
However,  whilst the transition density over time unit $\Delta_l$  is known,  it is typically the case that the transition density over time 1 (say) is unknown,  due to the fact that it is written in terms of an integral over the path $x_{k-1+\Delta_l}, \dots, x_{k-\Delta_l}$.  Moreover,  if one tries to obtain the gradient of the log of the transition density obtaining (say) an unbiased estimate,  the variance can explode as $l$ grows.   Note that conditional on $x_{k-1},x_k$ unbiasedness is needed so as to obtain the martingale property.
To understand the  point
on variance explosion,  denote the transition density associated to \eqref{eq:em_approx} as $f_{\theta,\Delta_l}(x'|x)$ then the transition density over unit time is
$$
f_{\theta,k-1,k}^l(x_k|x_{k-1}) := 
\int_{\mathbb{R}^{d(\Delta_l^{-1}-1)}} \left\{\prod_{j=0}^{\Delta_l^{-1}-1} 
f_{\theta,\Delta_l}(x_{k-1+(j+1)\Delta_l}|x_{k-1+j\Delta_l})\right\} dx_{k-1+\Delta_l}\dots dx_{k-\Delta_l}.
$$
Then under standard regularity assumptions,  which are omitted for brevity,  one has that
$$
\nabla \log\left\{f_{\theta,k-1,k}^l(x_k|x_{k-1})\right\} = 
$$
$$
\int_{\mathbb{R}^{d(\Delta_l^{-1}-1)}} \nabla \log \left\{\prod_{j=0}^{\Delta_l^{-1}-1} 
f_{\theta,\Delta_l}(x_{k-1+(j+1)\Delta_l}|x_{k-1+j\Delta_l})\right\} \pi_{\theta}^l(x_{k-1+\Delta_l},\dots,x_{k-\Delta_l}
|x_{k-1},x_k) dx_{k-1+\Delta_l}\dots dx_{k-\Delta_l}
$$
where
$$
\pi_{\theta}^l(x_{k-1+\Delta_l},\dots,x_{k-\Delta_l}
|x_{k-1},x_k) = \frac{\prod_{j=0}^{\Delta_l^{-1}-1} 
f_{\theta,\Delta_l}(x_{k-1+(j+1)\Delta_l}|x_{k-1+j\Delta_l})}{f_{\theta,k-1,k}^l(x_k|x_{k-1})}.
$$
So if one can sample from $\pi_{\theta}^l$ one could produce an unbiased estimate of $\nabla \log\left\{f_{\theta,k-1,k}^l(x_k|x_{k-1})\right\}$ and use this as part of an approximation of the exact method.  However,  estimates based on the mentioned idea often have a variance that will explode as $l$ grows and this is well understood in the MCMC and importance sampling literature;  see for example \cite{durham}. 
Intuitively this is because as $l$ grows the number of terms in the product grows and one does not have a convergence of this term (at least not fast enough). 
This variance explosion is an issue as one would like to run an algorithm which has $l$ large so as to approximate as closely as possible the $\theta^{\star}$ that is produced from the exact method.  

As has been realized in several works \cite{bridges,beskos,ub_grad,non_synch} if one seeks to estimate
$\nabla\log\left\{f_{\theta,k-1,k}(x'|x)\right\}$ one must start with a formula in continuous time and then seek a time discretized approximation of the formula.   In \cite{ub_grad} the authors focus on using the Girsanov change of measure,  but this is overly restrictive as it does not allow dependence of $\theta$ in the diffusion coefficient.  The approaches in 
\cite{bridges,beskos,non_synch} use an adaptation of a type of diffusion bridge method that has been considered in \cite{schauer,vd_meulen_guided_mcmc} and we now review that idea,  so that  it can be utiltized for our score-based martingale method.

\subsection{Approximating $\nabla\log\left\{f_{\theta,k-1,k}(x'|x)\right\}$}

The following exposition follows the reviews in \cite{bridges,beskos,non_synch}.
Consider the case $t\in[s_1,s_2]$, $0\leq s_1,s_2\leq T$ and let
$\mathbf{X}_{[s_1,s_2]}:=\{X_{t}\}_{t\in[s_1,s_2]}$, and $\mathbf{W}_{[s_1,s_2]}:=\{W_{t}\}_{t\in[s_1,s_2]}$.  Set $f_{\theta,t,s_{2}}(x'|x)$ denote the unknown transition density from time $t$ to $s_2$ associated to \eqref{eq:diff}.
If one could sample from $f_{\theta,s_1,s_2}$ to obtain $(x, x')\in \mathbb{R}^{2d}$, we will explain that 
we can interpolate these points by using a bridge process.
This process will have
a drift given by $\mu_{\theta}(x)+\Sigma_{\theta}(x)\nabla_x\log{f}_{\theta,t,s_2}(x'|x)$.
Let $\mathbb{P}_{\theta,x,x'}$ denote the law of the solution of the stochastic differential equation (SDE) \eqref{eq:diff}, on $[s_1,s_2]$, started at $x$ and conditioned to hit $x'$ at time $s_2$.

We introduce a user-specified auxiliary process $\{\tilde{X}_t\}_{t\in[s_1,s_2]}$ following:
\begin{align}
\label{eq:aux_SDE_tilde}
d \tilde X_{t} = \tilde \mu_{\theta}(t,\tilde X_{t})dt + \tilde \sigma_{\theta}(t,\tilde X_{t})dW_{t}, \quad t\in[s_1,s_2],\quad~\tilde{X}_{s_1} =x, 
\end{align}
where for each $\theta\in\Theta$, $\tilde \mu_{\theta}:[s_1,s_2]\times\mathbb{R}^{d}\rightarrow\mathbb{R}^{d}$ and $\tilde \sigma_{\theta}:\mathbb{R}^{d}\rightarrow\mathbb{R}^{d\times d}$ is
such that for each $\theta\in\Theta$
$$
\tilde \Sigma_{\theta} (s_2,x'):= \tilde \sigma_{\theta}(s_2,x') \tilde \sigma_{\theta}(s_2,x')^{\top} \equiv \Sigma_{\theta}(x').
$$ 
\eqref{eq:aux_SDE_tilde} is specified so that its transition density $\tilde{f}_\theta$ is available; see \cite[Section 2.2]{schauer} for the technical conditions on $\tilde \mu_{\theta}, \tilde \Sigma_{\theta}, \tilde f_\theta$. 
The main purpose of $\{\tilde X_t\}_{t\in[s_1,s_2]}$ is to sample $x'$ and use its transition density to construct another process $\{X_t^\circ\}_{t\in[s_1,s_2]}$ conditioned to hit $x'$ at $t=s_2$; which in turn will be an importance proposal for $\{X_t\}_{t\in[s_1,s_2]}$.  Set:
\begin{align}
\label{eq:aux_SDE}
d X^\circ_{t} = \mu_{\theta,s_2}^{\circ}(t,X^{\circ}_{t}; x')dt + \sigma_{\theta}(X^{\circ}_{t})dW_{t}, \quad t\in[s_1,s_2],\quad~X^{\circ}_{s_1} =x, 
\end{align}
where:
$$
\mu_{\theta,s_2}^{\circ}(t,x;x')=\mu_{\theta}(x)+\Sigma_{\theta}(x)\nabla_x\log\tilde{f}_{\theta,t,s_2}(x'|x),
$$ 
and denote by
 $\mathbb{P}^\circ_{\theta,x,x'}$ the probability law of the solution of (\ref{eq:aux_SDE}). 
The SDE in \eqref{eq:aux_SDE} yields: 
\begin{equation}
\mathbf{W}\rightarrow C_{\theta,s_1,s_2}(x,\mathbf{W}_{[s_1,s_2]},x'),
\label{eq:map}
\end{equation} 
mapping the driving Wiener noise $\mathbf{W}$ to the solution of \eqref{eq:aux_SDE}, reparametering the problem from $\mathbf{X}$ to $(\mathbf{W},x')$.

\cite{schauer} prove that $\mathbb{P}_{\theta,x,x'}$ and $\mathbb{P}^\circ_{\theta,x,x'}$ are absolutely continuous w.r.t.~each other, with Radon-Nikodym derivative: 
\begin{equation}
\frac{d\mathbb{P}_{\theta,x,x'} }{d \mathbb{P}^\circ_{\theta,x,x'} }(\mathbf{X}_{[s_1,s_2]})=
\exp\Big\{  \int_{s_1}^{s_2} L_{\theta,s_2}(t,X_t)dt\Big\} \times \frac{ \tilde{f}_{\theta,s_1,s_2}(x'|x)}{f_{\theta,s_1,s_2}(x'|x)},
\label{eq:density}
\end{equation}
where: 
\begin{align*}
L_{\theta,s_2}&(t,x):=\left(\mu_\theta(x)- \tilde{\mu}_\theta(t,x)\right)^{\top}\, \nabla_x\log\tilde{f}_{\theta,t,s_2}(x'|x)\\
&-\frac{1}{2}\textrm{Tr}\,\Big\{\,\big[ \Sigma_{\theta}(x)-\tilde{\Sigma}_{\theta}(t,x)\big] \big[ -\nabla_x^2\log\tilde{f}_{\theta,t,s_2}(x'|x)-\nabla_x\log\tilde{f}_{\theta,t,s_2}(x'|x)\nabla_x\log\tilde{f}_{\theta,t,s_2}(x'|x)^{\top} \big]\,\Big\} ,         
\end{align*}
with $\textrm{Tr}(\cdot)$ denoting the trace of a squared matrix.  
Set
$$
R_{\theta,s_1,s_2}(\mathbf{X}_{[s_1,s_2]}) := \frac{d\mathbb{P}_{\theta,x,x'} }{d \mathbb{P}^\circ_{\theta,x,x'} }(\mathbf{X}_{[s_1,s_2]})f_{\theta,s_1,s_2}(x'|x)
$$
Let $\mathbb{W}$ be the law of Brownian motion with associated expectation operator $\mathbb{E}$
then
$$
f_{\theta,k-1,k}(x'|x) = \mathbb{E}\left
[R_{\theta,k-1,k}\left(
C_{\theta,k-1,k}(x,\mathbf{W}_{[k-1,k]},x')\right)\right].
$$

\subsection{Time Discretization}

We first give the standard Euler-Maruyama time discretization of the solution to \eqref{eq:aux_SDE} (associated to a time interval $[k-1,k]$, $k\in\{1,\dots,T\}$) on a regular grid of spacing $\Delta_{l}=2^{-l}$, with starting point $x_{k-1}^{\circ}$ and ending point $x_{k}^{\circ}$. That is for $j\in\{0,1\dots,\Delta_{l}^{-1}-2\}$:
\begin{align}\label{eq:disc_circ}
X_{k-1+(j+1)\Delta_{l}}^{\circ} & = X_{k-1+j\Delta_{l}}^{\circ} + 
\mu_{\theta,k}^{\circ}(k-1+j\Delta_{l},X^{\circ}_{k-1+j\Delta_{l}}; x_{k}^\circ)\Delta_{l} + 
\nonumber \\ &
\sigma_{\theta}(X^{\circ}_{k-1+j\Delta_{l}})\left[W_{k-1+(j+1)\Delta_{l}}-W_{k-1+j\Delta_{l}}\right].
\end{align}
Given $(x_{k-1}^{\circ},x_{k}^{\circ})$ and $\mathbf{W}_{[k-1,k]}^l=(W_{k-1+\Delta_{l}}-W_{k-1},\dots,W_{k-\Delta_l}-W_{k-2\Delta_{l}})$,  the recursion \eqref{eq:disc_circ} induces a discretized path 
$X_{k-1+\Delta_{l}}^{\circ},\dots,X_{k-\Delta_{l}}^{\circ}$ and we write such a path, including the starting and ending points with the notation
$$
C_{\theta,k-1,k}^l(x_{k-1}^{\circ},\mathbf{W}_{[k-1,k]}^l,x_{k}^{\circ}).
$$
We also need a discretization of the Radon-Nikodym derivative.  Consider 
$$
\mathbf{X}_{[k-1,k]}^l=(X_{k-1},X_{k-1+\Delta_{l}},\dots,X_k)
$$ 
then we set
$$
R_{\theta,k-1,k}^l(\mathbf{X}_{[k-1,k]}^l) := 
\exp\Big\{ \sum_{j=0}^{\Delta_{l}^{-1}-1} 
L_{\theta,k}(t,X_{k-1+j\Delta_{l}})\Delta_{l}\Big\} \tilde{f}_{\theta,k-1,k}(X_{k}|X_{k-1}).
$$
We will work with
$$
f_{\theta,k-1,k}^l(x'|x) \propto \mathbb{E}\left
[R_{\theta,k-1,k}^l\left(
C_{\theta,k-1,k}^l(x,\mathbf{W}_{[k-1,k]}^l,x')\right)\right].
$$

\subsection{Time Discretized Method}

We propose the following scheme to estimate the parameters.  The initial iterate $\theta_0^l$ is given.  The following method is called \textbf{MPD} in the subsequent discussion, for martingale posterior for diffusions.

The first phase is as follows.

\begin{enumerate}
\item{Set $k=1$ and go to 2..} 
\item{Iterate: 
\begin{itemize}
\item{Sample $\mathbf{W}_{[k-1,k]}^l|x_{k},x_{k-1}$ from
$$
\pi_{\theta_{k-1}^l}^l(d\mathbf{w}_{[k-1,k]}^l|x_{k},x_{k-1}) \propto
R_{\theta_{k-1}^l,k-1,k}^l\left(
C_{\theta_{k-1}^l,k-1,k}^l(x_{k-1},\mathbf{w}_{[k-1,k]}^l,x_{k})\right) \mathbb{W}(d\mathbf{w}_{[k-1,k]}^l)
$$
compute $\nabla \log\left(R_{\theta_{k-1}^l,k-1,k}^l\left(
C_{\theta_{k-1}^l,k-1,k}^l(x_{k-1},\mathbf{w}_{[k-1,k]}^l,x_k)\right)\right)$.}
\item{Sample $\mathbf{V}_{[k-1,k]}^l,u_k|x_{k-1}$
$$
\pi_{\theta_{k-1}^l}^l(d\mathbf{v}_{[k-1,k]}^l,u_k|x_{k-1}) \propto
R_{\theta_{k-1}^l,k-1,k}^l\left(
C_{\theta_{k-1}^l,k-1,k}^l(x_{k-1},\mathbf{v}_{[k-1,k]}^l,u_k)\right) \mathbb{W}(d\mathbf{v}_{[k-1,k]}^l)
$$
and compute $\nabla \log\left(R_{\theta_{k-1}^l,k-1,k}^l\left(
C_{\theta_{k-1}^l,k-1,k}^l(x_{k-1},\mathbf{v}_{[k-1,k]}^l,u_k)\right)\right)$.}
\item{Update:
\begin{align*}
\theta_k^l & = \theta_{k-1}^l + \gamma_k \Big\{
\nabla \log(R_{\theta_{k-1}^l,k-1,k}^l\left(
C_{\theta_{k-1}^l,k-1,k}^l(x_{k-1},\mathbf{w}_{[k-1,k]}^l,x_k)\right)) - \\
& \nabla \log(R_{\theta_{k-1}^l,k-1,k}^l\left(
C_{\theta_{k-1}^l,k-1,k}^l(x_{k-1},\mathbf{v}_{[k-1,k]}^l,u_k)\right))
\Big\}.
\end{align*}
}
\end{itemize}
Go to 3..
}
\item{Set $k=k+1$ and if $k=T+1$ stop otherwise go to 2..}
\end{enumerate}

The second phase is as follows.

\begin{enumerate}
\item{Input $\theta_T^l$ and $X_T$ from the first phase, set $k=T+1$.  Sample $X_k|x_{k-1}$ using  
$$
f_{\theta_{k-1}^l,k-1,k}^l(x_k|x_{k-1}) \propto \int R_{\theta_{k-1}^l,k-1,k}^l\left(
C_{\theta_{k-1}^l,k-1,k}^l(x_{k-1},\mathbf{w}_{[k-1,k]}^l,x_k)\right) \mathbb{W}(d\mathbf{w}_{[k-1,k]}^l)
$$
 Go to 2..} 
\item{Iterate: 
\begin{itemize}
\item{Sample $\mathbf{W}_{[k-1,k]}^l|x_{k},x_{k-1}$ from
$$
\pi_{\theta_{k-1}^l}^l(d\mathbf{w}_{[k-1,k]}^l|x_{k},x_{k-1}) \propto
R_{\theta_{k-1}^l,k-1,k}^l\left(
C_{\theta_{k-1}^l,k-1,k}^l(x_{k-1},\mathbf{w}_{[k-1,k]}^l,x_{k})\right) \mathbb{W}(d\mathbf{w}_{[k-1,k]}^l)
$$
compute $\nabla \log\left(R_{\theta_{k-1}^l,k-1,k}^l\left(
C_{\theta_{k-1}^l,k-1,k}^l(x_{k-1},\mathbf{w}_{[k-1,k]}^l,x_k)\right)\right)$.}
\item{Sample $\mathbf{V}_{[k-1,k]}^l,u_k|x_{k-1}$
$$
\pi_{\theta_{k-1}^l}^l(d\mathbf{v}_{[k-1,k]}^l,u_k|x_{k-1}) \propto
R_{\theta_{k-1}^l,k-1,k}^l\left(
C_{\theta_{k-1}^l,k-1,k}^l(x_{k-1},\mathbf{v}_{[k-1,k]}^l,u_k)\right) \mathbb{W}(d\mathbf{v}_{[k-1,k]}^l)
$$
and compute $\nabla \log\left(R_{\theta_{k-1}^l,k-1,k}^l\left(
C_{\theta_{k-1}^l,k-1,k}^l(x_{k-1},\mathbf{v}_{[k-1,k]}^l,u_k)\right)\right)$.}
\item{Update:
\begin{align*}
\theta_k^l & = \theta_{k-1}^l + \gamma_k \Big\{
\nabla \log(R_{\theta_{k-1}^l,k-1,k}^l\left(
C_{\theta_{k-1}^l,k-1,k}^l(x_{k-1},\mathbf{w}_{[k-1,k]}^l,x_k)\right)) - \\
& \nabla \log(R_{\theta_{k-1}^l,k-1,k}^l\left(
C_{\theta_{k-1}^l,k-1,k}^l(x_{k-1},\mathbf{v}_{[k-1,k]}^l,u_k)\right))
\Big\}.
\end{align*}
}
\end{itemize}
Go to 3..
}
\item{Set $k=k+1$ and sample $X_k|x_{k-1}$ using 
\begin{equation}\label{eq:f_disc}
f_{\theta_{k}^l,k-1,k}(x_k|x_{k-1}) \propto \int R_{\theta_k^l,k-1,k}^l\left(
C_{\theta_k^l,k-1,k}^l(x_{k-1},\mathbf{w}_{[k-1,k]}^l,x_k)\right) \mathbb{W}(d\mathbf{w}_{[k-1,k]}^l)
\end{equation}
Go to 2..}
\end{enumerate}

\subsubsection{On the Martingale Property}

To construct our martingale,  we specify that $\mathscr{F}_T$ is the $\sigma-$field that is generated by
$$
\mathsf{R}_T^l := \left(\theta_0,X_0,\dots,\theta_T^l,X_T, \mathbf{W}_{[0,1]}^l,\dots,\mathbf{W}_{[T-1,T]}^l,
\mathbf{V}_{[0,1]}^l,\dots,\mathbf{V}_{[T-1,T]}^l, U_1,\dots,U_T
\right).
$$
For any $t\geq T+1$ we set
 $\mathscr{F}_t$ as the $\sigma-$field that is generated by 
$$
\left(\mathsf{R}_T^l, \theta_{T+1}^l,X_{T+1},\dots, \theta_{t}^l,X_{t},
\mathbf{W}_{[T,T+1]}^l,\dots,\mathbf{W}_{[t-1,t]}^l,
\mathbf{V}_{[T,T+1]}^l,\dots,\mathbf{V}_{[t-1,t]}^l, U_{T+1},\dots,U_t
\right).
$$
Set
$$
\mathsf{S}_{k-1,k}^l(x_{k-1:k},u_k,\theta,\mathbf{w}_{[k-1,k]}^l,\mathbf{v}_{[k-1,k]}^l) = 
$$
\begin{equation}\label{eq:change_score}
\nabla \log(R_{\theta,k-1,k}^l\left(C_{\theta,k-1,k}^l(x_{k-1},\mathbf{w}_{[k-1,k]}^l,x_k)\right)) -
 \nabla \log(R_{\theta,k-1,k}^l\left(C_{\theta,k-1,k}^l(x_{k-1},\mathbf{v}_{[k-1,k]}^l,u_k)\right)).
\end{equation}
 Write the expectation associated to the law of the process described in phases 1 and 2 as
$\mathbb{E}$ and suppose that 
\begin{eqnarray*}
\sup_{k\geq 0}\mathbb{E}[|\theta_k|] & <& +\infty \\
\sup_{k\geq 1}\mathbb{E}\left[\sup_{\theta\in\Theta}
\left|\mathsf{S}_{k-1,k}^l(X_{k-1:k},U_k,\theta,,\mathbf{W}_{[k-1,k]}^l,\mathbf{V}_{[k-1,k]}^l)\right|\right] & < & +\infty
\end{eqnarray*}
then we have that $(\theta_k^l)_{k\geq T}$ is a martingale relative to $(\mathscr{F}_k)_{k\geq T}$.   We suppose that the martingale property holds,  without further technical discussion. 

\subsubsection{Comments on the Method and Practical Considerations}

One may wander why the expression $\mathsf{S}_{k-1,k}^l(x_{k-1:k},u_k,\theta,\mathbf{w}_{[k-1,k]}^l,\mathbf{v}_{[k-1,k]}^l)$ in \eqref{eq:change_score} is used in step 2,  the update of the parameter.  This is because 
the normalizing constant of 
$f_{\theta_{k}^l,k-1,k}(x_k|x_{k-1})$ in \eqref{eq:f_disc} is not typically one,  so must modify the score that one uses in the standard approach.  

In practice to compute $\nabla \log\left(R_{\theta,k-1,k}^l\left(
C_{\theta,k-1,k}^l(x_{k-1},\mathbf{w}_{[k-1,k]},x_k)\right)\right)$ one would need to use automatic differentiation as one does not have access to exact expressions.  This is what is done in all of our numerical simulations. 
We also note that the simulation from $\pi_{\theta_{k-1}^l}^l(d\mathbf{w}_{[k-1,k]}^l|x_{k},x_{k-1})$ and
$\pi_{\theta_{k-1}^l}^l(d\mathbf{v}_{[k-1,k]}^l,u_k|x_{k-1})$ is not standard and we have used rejection sampling in our examples.  As rejection sampling often only works well in low or moderate dimensions,  this suggests that the approach that we have outlined is useful, relative to MCMC, when $d$ (the dimension of the diffusion) is small and the length of the time series $T$ is large.  The cost of each update step is $\mathcal{O}(\Delta_l^{-1})$,  but this does not incorporate the cost of rejection sampling.  In our examples we have not found that the latter cost dominates the simulation.

\section{Theoretical Result}\label{sec:theory}

We now present our main theoretical result.  The objective is to compare the iterate $\theta_k^l$ from \textbf{MPD} to the one that is produced by the `perfect' version of \textbf{MPD} ($\theta_k$),  that is the one as $l\rightarrow\infty$; for completeness this is detailed in Appendix \ref{app:theory},  but it should be clear what that means on just reading 
\textbf{MPD}.  We note of course that the perfect version of \textbf{MPD} can never be implemented,  but would produce a martingale posterior that does not possess any time-discretization bias.  The objective of the following result is to understand the cost we pay versus this perfect \textbf{MPD} in terms of the time-discretization parameter.

We have the following result whose assumptions, (A\ref{ass:1}) and (A\ref{ass:2}),  are stated in Appendix \ref{app:theory_ass} and proved in Appendix \ref{app:theory_prf}.  We denote by $\|\cdot\|$ denotes the $L_2-$norm for vectors and the meaning of the expectation operator $\check{\mathbb{E}}$ below is clarified in Appendix \ref{app:theory_ass}.

\begin{theorem}\label{theo:main}
Assume (A\ref{ass:1}-\ref{ass:2}).  Then there exists a $C<+\infty$ such that for any $l\in\mathbb{N}$ we have
$$
\sup_{n\geq 1}\check{\mathbb{E}}\left[\left\|\theta_n^l-\theta_n\right\|^2\right] \leq C\Delta_l.
$$
\end{theorem}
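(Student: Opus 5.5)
The plan is to couple the discretized recursion producing $\theta_n^l$ with the perfect recursion producing $\theta_n$ on one probability space, the one carried by $\check{\mathbb{E}}$. The two chains share the observed data $x_0,\dots,x_T$ in phase one. In both phases they are driven by common randomness: the Brownian increments $\mathbf{W}^l_{[k-1,k]}$ are taken as the dyadic skeleton of the continuous $\mathbf{W}_{[k-1,k]}$, and likewise for $\mathbf{V}$. The simulated endpoints $X_k$ and $U_k$ are generated through a coupling, for instance common-uniform or maximal couplings of the rejection samplers. Writing $e_n:=\theta_n^l-\theta_n$, the recursion gives
\begin{equation*}
e_n = e_{n-1} + \gamma_n\Big\{\mathsf{S}^l_{n-1,n}(\cdot,\theta^l_{n-1},\cdot)-\mathsf{S}_{n-1,n}(\cdot,\theta_{n-1},\cdot)\Big\}.
\end{equation*}
I would split the bracket into two pieces. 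The first is a \emph{stability term} $\mathsf{S}^l(\theta^l_{n-1})-\mathsf{S}^l(\theta_{n-1})$, evaluated at the same random inputs. The second is a \emph{discretization term} $\mathsf{S}^l(\theta_{n-1})-\mathsf{S}(\theta_{n-1})$, which also absorbs the mismatch between the coupled endpoints and bridge samples.

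For the stability term, I expect (A\ref{ass:1}) to provide a Lipschitz-in-$\theta$ bound on $\nabla\log R^l\circ C^l$ with a constant $L$ uniform in $l$, with square-integrable random multipliers. This comes from differentiating the Euler scheme \eqref{eq:disc_circ} and the discretized Radon--Nikodym functional in $\theta$, using the smoothness and ellipticity assumptions (i)--(ii). For the discretization term, the standard strong-error estimates for the guided process \eqref{eq:aux_SDE} apply, as in \cite{bridges,beskos,non_synch}. The Euler path and its $\theta$-derivative converge in $L_2$ at rate $\Delta_l^{1/2}$. The Riemann sum in $R^l$ converges to the integral $\int L_{\theta,s_2}$ at the same rate, with care near the endpoint where $\nabla_x\log\tilde f$ is singular (presumably controlled by (A\ref{ass:2})). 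The coupled samples from $\pi^l_\theta$ and $\pi_\theta$, and from $f^l_\theta$ and $f_\theta$, differ by $\mathcal{O}(\Delta_l^{1/2})$ in $L_2$ or by $\mathcal{O}(\Delta_l)$ in probability of decoupling. Together these give $\check{\mathbb{E}}[\|\text{disc.\ term}\|^2]\le C\Delta_l$ uniformly in $\theta$ and $n$.

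Next I square the recursion and take expectations:
\begin{equation*}
\check{\mathbb{E}}\|e_n\|^2 \le \check{\mathbb{E}}\|e_{n-1}\|^2 + 2\gamma_n\check{\mathbb{E}}\langle e_{n-1},D_n\rangle + C\gamma_n^2\big(\check{\mathbb{E}}\|e_{n-1}\|^2+\Delta_l\big),
\end{equation*}
where $D_n$ is the bracket. The cross term is handled by conditioning on $\mathscr{F}_{n-1}$. Both $\theta^l_n$ and $\theta_n$ are martingales in phase two, so $\check{\mathbb{E}}[D_n\mid\mathscr{F}_{n-1}]=0$ for $n>T$ and the cross term vanishes. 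In phase one, which has finitely many steps, the cross term is bounded by Cauchy--Schwarz as $C\gamma_n(\check{\mathbb{E}}\|e_{n-1}\|^2+\Delta_l)$. A discrete Gr\"onwall argument then gives
\begin{equation*}
\check{\mathbb{E}}\|e_n\|^2\le C\Delta_l\Big(\sum_{k\le T}\gamma_k+\sum_k\gamma_k^2\Big)\exp\Big(C\sum_k\gamma_k^2+C\sum_{k\le T}\gamma_k\Big),
\end{equation*}
and since $\sum_k\gamma_k^2<\infty$ the bound is uniform in $n$.

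The main obstacle is the discretization estimate under coupling. It must hold uniformly in $\theta$ along the random trajectory, and it must cope with the bridge endpoints being sampled from $\theta$-dependent laws that differ between levels $l$ and $\infty$. I would first try to build this uniformity directly into (A\ref{ass:1}-\ref{ass:2}): bounded moments of the score functionals uniform in $(\theta,l,k)$, and an explicit coupling bound $\check{\mathbb{E}}\|\mathsf{S}^l-\mathsf{S}\|^2\le C\Delta_l$. The remaining work then reduces to the Gr\"onwall bookkeeping above.
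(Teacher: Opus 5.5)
Your proposal has a genuine gap at the claim $\check{\mathbb{E}}[\|\text{disc.\ term}\|^2]\le C\Delta_l$ ``uniformly in $\theta$ and $n$''.

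In phase two the two chains carry different simulated states, $X_{n-1}^l\neq X_{n-1}$, and $D_n$ depends on them. Your bound tacitly compares samples from $f^l_\theta(\cdot|x)$ and $f_\theta(\cdot|x)$ with the \emph{same} conditioning point $x$. It also uses the same $\theta$, whereas the two chains sample at $\theta^l_{n-1}$ and $\theta_{n-1}$. The state error is neither multiplied by $\gamma_n$ nor a martingale, so a Gr\"onwall recursion for $e_n$ alone does not control it. Each step injects a fresh local error of order $\Delta_l$, and without a contraction in $x$ these errors accumulate. For instance, if they merely add up, then $\check{\mathbb{E}}\|X^l_k-X_k\|^2\sim k\Delta_l$. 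Take $\gamma_k\asymp 1/k$ and any score that is sensitive to the starting state: then $\sum_{k\le n}\gamma_k^2\,\check{\mathbb{E}}\|D_k\|^2$ can grow like $\Delta_l\log n$, which is not uniform in $n$. Uniform-in-$n$ control requires a forgetting property of the chain in $(\theta,x)$. That is exactly what (A\ref{ass:2}) supplies, and your argument never uses it.

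Relatedly, you are not working from the stated hypotheses.
\begin{itemize}
\item (A\ref{ass:1}) only bounds the \emph{one-step local} error when both chains start from the common point $(\theta_j^l,X_j^l)$, namely $\mathcal{D}_j^l$ and $X_{j+1}-X_{j+1}^l$. It gives no Lipschitz-in-$\theta$ bound on $\nabla\log R^l\circ C^l$, and none on the $\theta$-dependent laws $\pi^l_\theta$, $f^l_\theta$. Your stability term needs both.
\item (A\ref{ass:2}) has nothing to do with the endpoint singularity of $\nabla_x\log\tilde f$.
\item The strong-error rates for the guided bridge are not available off the shelf. The paper assumes them in (A\ref{ass:1}) precisely because they are unproven.
\end{itemize}

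The paper sidesteps all of this with the telescoping decomposition \eqref{eq:telescope_sum}. Each summand is a single local error, controlled by (A\ref{ass:1}). That error is propagated only through the \emph{perfect} flow, whose sensitivity to its initial $(\theta,x)$ is $\alpha_{n-j}$ by (A\ref{ass:2}). Summability of $\alpha$ gives uniformity in $n$. The martingale property kills $\check{\mathbb{E}}[\mathcal{S}_j]$ for $j\ge T$, and independence of the summands under $\check{\mathbb{E}}$ handles the cross terms.

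Your observation that $e_n$ is a martingale in phase two is useful. It needs a Markovian coupling, so that each chain's increment is centred given the \emph{joint} past. To turn it into a proof you would have to add a uniform-in-time contraction for $(X^l_n,X_n)$ and Lipschitz couplings in $(\theta,x)$, and neither is among (A\ref{ass:1})--(A\ref{ass:2}).
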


The implication of this result is simply this.  We have shown that,  under assumptions,  in terms of the expected square $L_2-$norm,  that the \textbf{MPD} method maintains a $\mathcal{O}(\Delta_l)$ distance to the martingale posterior that has no time discretization bias.  In other words,  we avoid any possible exponential variance explosion in terms of the time discretization parameter $l$,  which as remarked in Section \ref{sec:td_issues} is a possibility if one does not carefully construct the approximation of the transitition density of the diffusion.  The main reason for this control is the fact that there is a martingale posterior in true continuous time and,  indeed,  that using Euler-Maruyama time discretizations one has a consistent,  in terms of $l$,  approximation of the said martingale posterior.  From a computational perspective this is rather reassuring in that one does not have to deal with exponential variances when improving the accuracy of the time discretization.  

\section{Numerical Simulations}\label{sec:numerics}

\subsection{Example 1}

We consider the Ornstein-Uhlenbeck (OU) process defined by the stochastic
differential equation
\begin{equation*}
  \mathrm{d}X_t = \theta(\mu - X_t)\,\mathrm{d}t + \sigma\,\mathrm{d}W_t,
\end{equation*}
where $\theta\in\mathbb{R}^+$ is the parameter of interest and $\mu,\sigma$ fixed.  We note that of course one
could run an exact method for this diffusion, but it is used as a test case.  For the auxiliary process we use another OU process except with $\theta$ fixed at some reference value,  and we call this parameter $ \theta_{\mathrm{aux}}$.

We will simulate 100 data points, $x_0 = 10$,  with the true parameters $\theta = 3$, $\mu = 10$, and $\sigma = 0.5$,
with only $\theta$ treated as unknown.  The data are generated at equally spaced times with gap 0.2.
The auxiliary process uses
$\theta_{\mathrm{aux}} = 5$, and each observation interval is subdivided
into $16$ Euler--Maruyama sub-steps of size $\Delta_l = 0.0125$.
The algorithm is initialised at $\theta_0 = 5$.  The stochastic gradient
update uses base learning rate $\eta = 30$ and offset $c = 50$,
\begin{equation}
  \gamma_k = \frac{30}{k + 50}.
\end{equation}
After the Phase~1 processes all $100$
observed  data points Phase~2 extends the trajectory by $300$ generative steps with the iteration counter continuing from $k = 100$.
The entire procedure is repeated  $100$ times with
independent random seeds, all sharing the same observed dataset.


Figure~\ref{fig:ou-result} displays the estimation trajectories from all
$100$ independent repetitions.  The blue curves correspond to Phase~1
(iterations $0$--$100$), where the estimator is driven by the observed
data; the red curves correspond to Phase~2 (iterations $100$--$400$),
where it continues with generatively sampled observations.
During Phase~1, the trajectories descend rapidly from $\theta_0 = 5$
toward the true value $\theta=3$.  Despite a large initial spread,
most runs have converged close to the true value by iteration~$100$.
The middle panel confirms this: the distribution of $\hat\theta$ across
repetitions after Phase~1 is concentrated near $\theta^* = 3$.
The final mean estimate across all repetitions is
$\hat\theta \approx 2.844$ (orange dotted line), after Phase~2.

\begin{figure}[htbp]
  \centering
  \includegraphics[width=\textwidth]{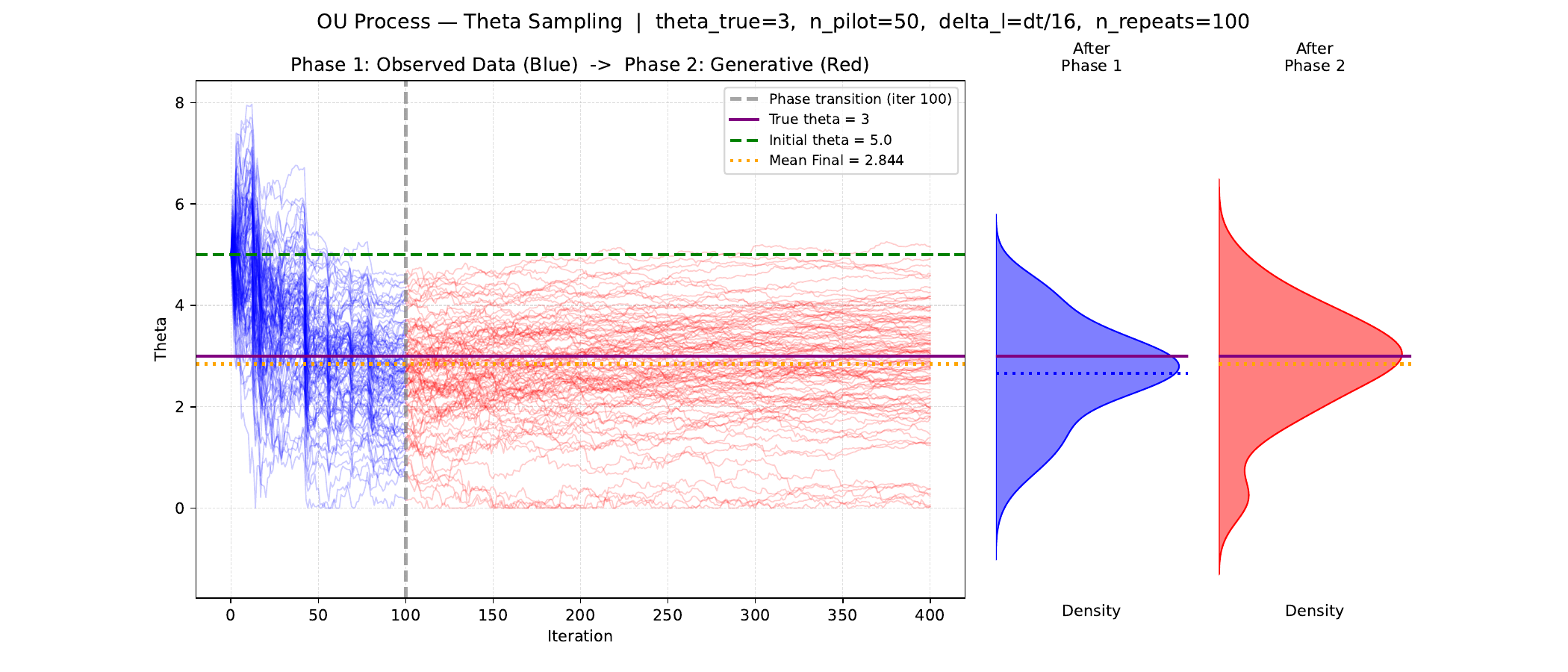}
  \caption{Parameter estimation trajectories over $100$ independent
    repetitions.  \textbf{Left}: blue curves show Phase~1 (observed data,
    iterations $0$--$100$); red curves show Phase~2 (generative
    continuation, iterations $100$--$400$).  The purple line marks the true
    value $\theta^* = 3$, the green dashed line the initial value
    $\theta_0 = 5$, and the orange dotted line the final mean
    $\hat\theta \approx 2.844$.  \textbf{Middle and right}: distribution of $\hat\theta$ across
    repetitions at the end of Phase~1 and Phase~2, respectively.}
  \label{fig:ou-result}
\end{figure}

\subsection{Example 2}

The stochastic Lotka--Volterra (SLV) system for prey $X_t^1$ and predator $X_t^2$
is
\begin{align*}
  \mathrm{d}X_t^1 &= X_t^1\!\left(\alpha - \beta X_t^2\right)\mathrm{d}t
                    + \sigma_1 X_t^1\,\mathrm{d}W_t^1, \\
  \mathrm{d}X_t^2 &= X_t^2\!\left(\zeta X_t^1 - \gamma\right)\mathrm{d}t
                    + \sigma_2 X_t^2\,\mathrm{d}W_t^2,
\end{align*}
with independent Brownian motions $W^1, W^2$ and unknown parameter
$\theta = (\alpha, \beta, \zeta, \gamma)\in(\mathbb{R}^{+})^4$.  The
multiplicative noise and nonlinear cross-coupling make the transition
density intractable.  We will use the following auxiliary process. 
\begin{align*}
  \mathrm{d}\tilde X_t^1 &= \tilde X_t^1
    \left[(\alpha - \beta x^2)\!\left(1 - \tfrac{t}{t_1}\right)
         + (\alpha - \beta {x'}^2)\tfrac{t}{t_1}\right]\mathrm{d}t
    + \sigma_1 \tilde X_t^1\,\mathrm{d}W_t^1, \\
  \mathrm{d}\tilde X_t^2 &= \tilde X_t^2
    \left[(\zeta x^1 - \gamma)\!\left(1 - \tfrac{t}{t_1}\right)
         + (\zeta {x'}^1 - \gamma)\tfrac{t}{t_1}\right]\mathrm{d}t
    + \sigma_2 \tilde X_t^2\,\mathrm{d}W_t^2.
\end{align*}

We generate 100 data points with time spacing 0.1 with true parameters $\alpha^* = 1.0$, $\beta^* = 0.5$,
$\zeta^* = 0.3$, $\gamma^* = 0.8$, with diffusion coefficients
$\sigma_1 = 0.2$ and $\sigma_2 = 0.15$.  All four drift parameters are
treated as unknown and estimated jointly; $\sigma_1$ and $\sigma_2$ are
assumed known.  
Each observation interval is further subdivided into $8$
Euler--Maruyama sub-steps of size $\Delta_l = 0.0125$.  The algorithm is
initialized at $\theta_0 = (0.5,\, 1.0,\, 0.5,\, 0.5)$.  The step size
$ \gamma_k = 0.5/(k + 100)$.
After the Phase~1 process,  Phase~2 extends the trajectory by $100$
generative steps with the iteration counter continuing seamlessly.  The
entire procedure is repeated $50$ times with
independent random seeds, all sharing the same observed dataset.


Figure~\ref{fig:lv-result} shows the estimation trajectories for all four
parameters across $50$ independent repetitions.  Each row corresponds to
one parameter; the blue curves cover Phase~1 (iterations $0$--$100$) and
the red curves cover Phase~2 (iterations $100$--$200$).
For $\alpha$ and $\beta$, Phase~1 achieves rapid and reliable convergence.
The trajectories for $\alpha$ rise from the initial value $0.5$ toward the
true value $1.0$, and by iteration~$100$ the estimates are tightly
concentrated near $1.0$.  Similarly, $\beta$ descends from $1.0$ toward
$0.5$ and converges well within Phase~1.  The final mean estimates,
$\hat\alpha \approx 0.971$ and $\hat\beta \approx 0.498$, are close to
their respective true values with small bias.

Convergence is slower for $\zeta$ and $\gamma$.  The $\zeta$ trajectories
descend from $0.5$ toward $0.3$ but with noticeably more spread across
runs, and some trajectories dip below zero during Phase~1.  The final
mean $\hat\zeta \approx 0.333$ lies slightly above the true value $0.3$.
For $\gamma$, the estimates rise from $0.5$ but do not fully reach the
true value $0.8$ by the end of Phase~1.
In Phase~2, the distributions of all four parameters widen relative to
the end of Phase~1, consistent with the behaviour observed in the OU
experiment: once real observations are no longer available, the generative
continuation introduces additional variance without the anchoring effect
of data.

\begin{figure}[htbp]
  \centering
  \includegraphics[width=\textwidth,height=12cm]{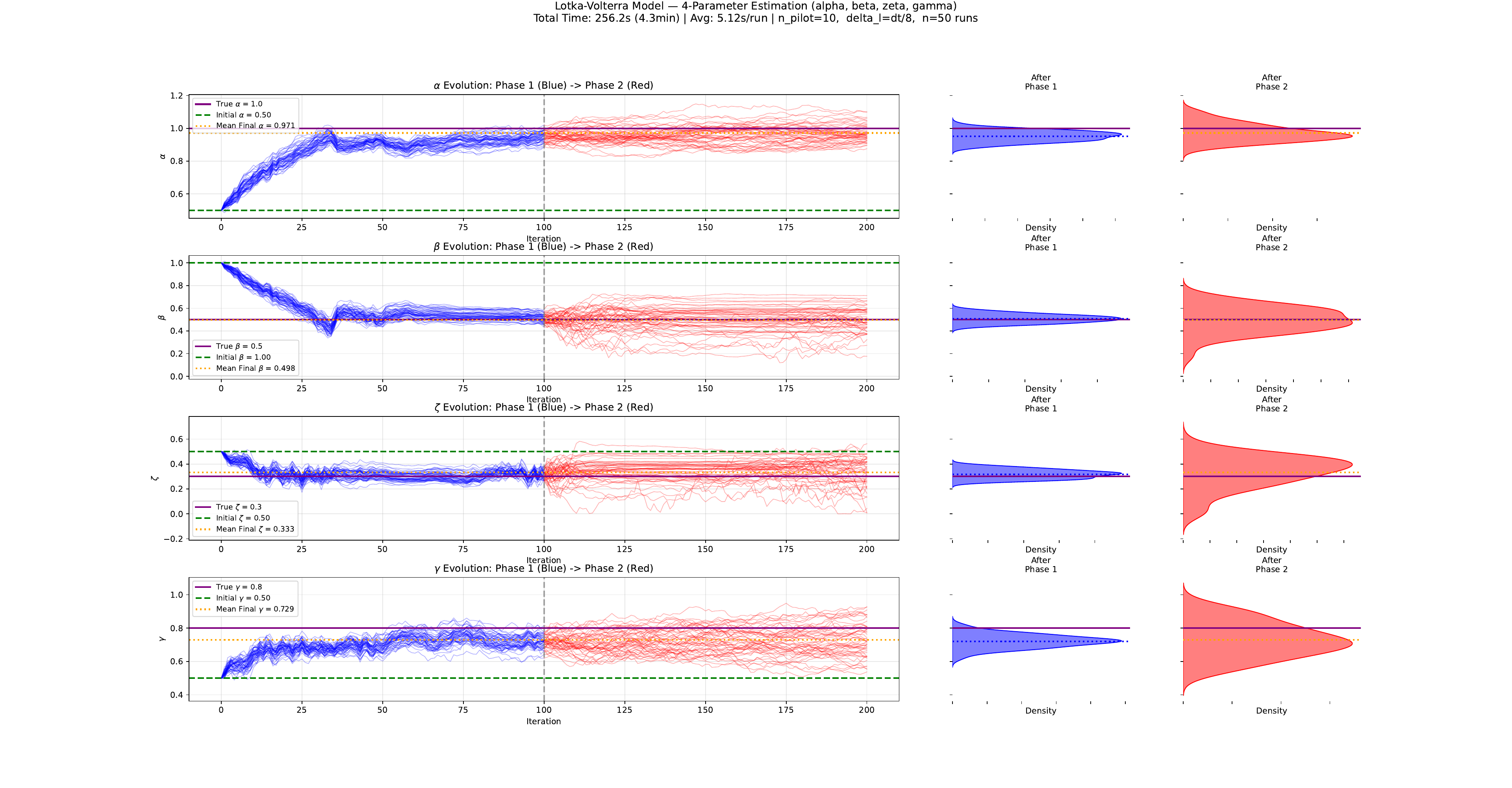}
  \caption{Four-parameter estimation trajectories over $50$ independent
    repetitions.  Each row shows one parameter ($\alpha$, $\beta$, $\zeta$,
    $\gamma$); blue curves are Phase~1 (observed data, iterations
    $0$--$100$) and red curves are Phase~2 (generative continuation,
    iterations $100$--$200$).  Purple lines mark true values; green dashed
    lines mark initial values; orange dotted lines mark final mean
    estimates.  Side panels show the distribution of estimates across
    repetitions at the end of each phase.}
  \label{fig:lv-result}
\end{figure}

\subsubsection{Comparison with MCMC}

We compare our approach with the state-of-the-art MCMC method in \cite{non_synch}.
This approach is based upon multilevel Monte Carlo \cite{giles} and also uses the bridge method
adopted in this paper.  Although it is designed for non-snychronous observations it is easily adapted to
the problem in this paper.   We compare the scenario of Figure \ref{fig:lv-result} to using MCMC in this section.
The MCMC is the multilevel approach in \cite{non_synch} with the priors used in that paper for the unknown parameters we consider.

Figure~\ref{fig:compare-bd-sgd} compares the empirical marginal
distributions produced by the two methods on the same observed
trajectory.  The martingale-score density (blue) is obtained as a
kernel density estimate over $100$ independent
repetitions; the MCMC density (red) is a kernel density estimate
over $119$ thinned draws extracted from a chain of length $22000$
at the finest discretization level.  We note that the comparison here is mainly for the purposes of
computational time/efficiency.  The Bayesian approach in \cite{non_synch} samples from an approximate version of the posterior of the parameters and corrects via importance sampling,  so intrinsically the samples are displayed just to provide idea of the spread,  rather than for inference per-se.

The total wall-clock time is $25.5$ minutes for the
martingale-score method and $181.4$ minutes for MCMC.  The
per-co-ordinate effective sample sizes (computed via the method of \cite{gey1,gey2}) of the MCMC chain are $135.7$,
$132.9$, $122.3$, and $118.9$ for $\alpha$, $\beta$, $\zeta$, and
$\gamma$ respectively, all of which exceed $100$; the worst-case
integrated autocorrelation time, $185.1$ on $\gamma$, sets the
thinning step.  Translated to cost per independent sample,
 is $15.3$ seconds for the martingale-score method
and $91.5$ seconds for MCMC, a factor of approximately $6$.  

\begin{figure}[htbp]
  \centering
  \includegraphics[width=\textwidth]{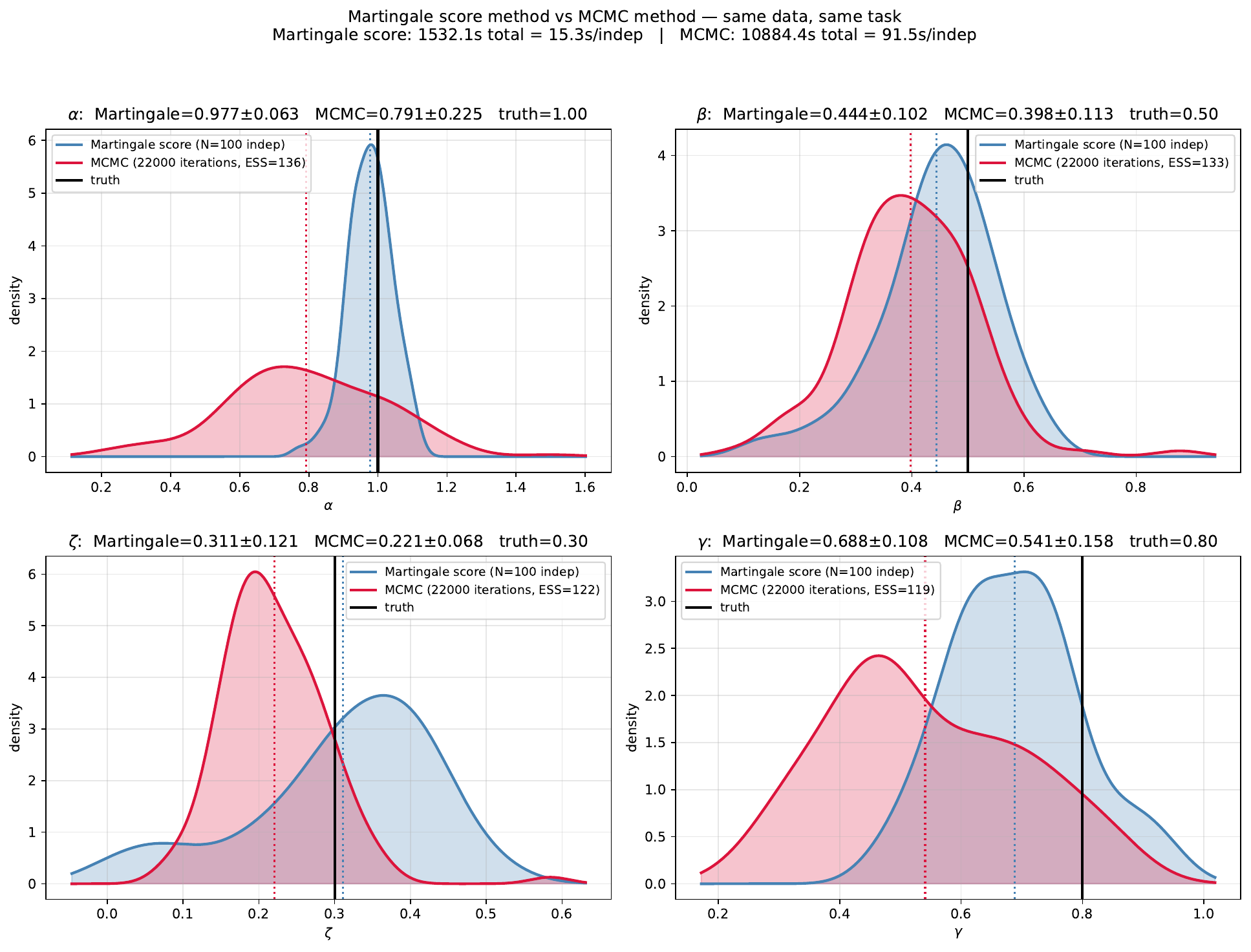}
  \caption{Marginal densities of the four drift parameters
    $(\alpha, \beta, \zeta, \gamma)$ on the same observed SLV trajectory
    used in Figure \ref{fig:lv-result}, comparing the
    martingale-score gradient method (blue, kernel density over
    $100$ independent repetitions) with the MLPMMH
    sampler of \cite{non_synch} (red, kernel density over
    $119$ approximately independent draws thinned from a chain of
    $22000$ iterations at discretization level 8).  Solid black lines mark the
    true parameter values; dotted vertical lines mark the empirical
    posterior means.  The header reports total wall time and cost per
    independent sample for each method.}
  \label{fig:compare-bd-sgd}
\end{figure}

\section{Summary and Future Work}\label{sec:sum}

In this article we have developed a new score-based martingale method for diffusions that are observed with low frequency.  We have proved,  under assumptions,  that the estimator is close to a continuous-time version of the algorithm uniformly in time.  We implemented our method in several examples and have established that it can produce statistically meaningful inferences at computational times which are far less than state-of-the-art MCMC algorithms.

There are several avenues for future research.  The first is to consider an extension to use the multilevel Monte Carlo method.  The idea would be to consider expectations of functions associated to a hierarchy of martingale posteriors with falling levels of time-discretization.  Then one can attempt to utilize the multilevel identity to reduce the computational effort as is typical in that literature.  A second important area is that of high-frequency observations. In such cases one could envisage time steps of the martingale posterior at the level of frequency of the observation, as was done here,  except simple approximations, such as Euler-Maruyama densities could work rather well,  leading to a fast method, albeit at high frequency.  Thirdly is to consider extensions to other models,  such as partially observed diffusions (e.g.~\cite{jasra}) and an investigation in that case may well be warranted as existing (exact) computational methods can be notoriously slow.

\appendix

\section{Theoretical Exposition}\label{app:theory}

\subsection{Perfect \textbf{MPD}}

The following is the \textbf{MPD} with $l\rightarrow\infty$; for brevity we omit the first phase,  but inspection of this second phase,  it should be clear what the first phase would be.
This is the reference method we will compare the iterates of \textbf{MPD} to.  In the below and throughout we suppose $\theta_0=\theta_0^l$ which is the initialization used for 
\textbf{MPD}.

\begin{enumerate}
\item{Input $\theta_T$ and $X_T$ from the first phase, set $k=T+1$.  Sample $X_k|x_{k-1}$ using 
$$
f_{\theta_{k-1},k-1,k}(x_k|x_{k-1}) \propto \int R_{\theta_{k-1},k-1,k}\left(
C_{\theta_{k-1},k-1,k}(x_{k-1},\mathbf{w}_{[k-1,k]},x_k)\right) \mathbb{W}(d\mathbf{w}_{[k-1,k]})
$$
 Go to 2..} 
\item{Iterate: 
\begin{itemize}
\item{Sample $\mathbf{W}_{[k-1,k]}|x_{k},x_{k-1}$ from
$$
\pi_{\theta_{k-1}}(d\mathbf{w}_{[k-1,k]}|x_{k},x_{k-1}) \propto
R_{\theta_{k-1},k-1,k}\left(
C_{\theta_{k-1},k-1,k}(x_{k-1},\mathbf{w}_{[k-1,k]},x_{k})\right) \mathbb{W}(d\mathbf{w}_{[k-1,k]})
$$
compute $\nabla \log\left(R_{\theta_{k-1},k-1,k}\left(
C_{\theta_{k-1},k-1,k}(x_{k-1},\mathbf{w}_{[k-1,k]},x_k)\right)\right)$.}
\item{Sample $\mathbf{V}_{[k-1,k]},u_k|x_{k-1}$
$$
\pi_{\theta_{k-1}}(d\mathbf{v}_{[k-1,k]},u_k|x_{k-1}) \propto
R_{\theta_{k-1},k-1,k}\left(
C_{\theta_{k-1},k-1,k}(x_{k-1},\mathbf{v}_{[k-1,k]},u_k)\right) \mathbb{W}(d\mathbf{v}_{[k-1,k]})
$$
and compute $\nabla \log\left(R_{\theta_{k-1},k-1,k}\left(
C_{\theta_{k-1},k-1,k}(x_{k-1},\mathbf{v}_{[k-1,k]},u_k)\right)\right)$.}
\item{Update:
\begin{align*}
\theta_k& = \theta_{k-1} + \gamma_k \Big\{
\nabla \log(R_{\theta_{k-1},k-1,k}\left(
C_{\theta_{k-1},k-1,k}(x_{k-1},\mathbf{w}_{[k-1,k]},x_k)\right)) - \\
& \nabla \log(R_{\theta_{k-1},k-1,k}\left(
C_{\theta_{k-1},k-1,k}(x_{k-1},\mathbf{v}_{[k-1,k]},u_k)\right))
\Big\}.
\end{align*}
}
\end{itemize}
Go to 3..
}
\item{Set $k=k+1$ and sample $X_k|x_{k-1}$ using 
$$
f_{\theta_{k},k-1,k}(x_k|x_{k-1}) \propto \int R_{\theta_k,k-1,k}\left(
C_{\theta_k,k-1,k}(x_{k-1},\mathbf{w}_{[k-1,k]},x_k)\right) \mathbb{W}(d\mathbf{w}_{[k-1,k]})
$$
Go to 2..}
\end{enumerate}

\subsection{Notations and Conventions}

The basic idea of the proof of Theorem \ref{theo:main} is based upon an interpolation type argument which is used in the analysis of the stability of the filter and numerical methods for ordinary differential equations; see for example \cite[Chapter 7]{delmoral} and the references therein.  To that end let $s\in\{\emptyset,l\}$ 
be a subscript 
where $l\in\mathbb{N}$ is given and fixed and if $s=\emptyset$ then there is no subscript (i.e.~it is null) and define the recursion for $m\geq 0$ fixed,  $k\geq m$
$$
\theta_{k+1}^s(m,\theta_m^s,X_m^s) = \theta_{k}^s(m,\theta_m^s,X_m^s) + \gamma_k \mathsf{S}_{k,k+1}^s\left(
X_{k:k+1}^s,U_{k+1}^s,\theta_{k}^s(m,\theta_m^s,X_m^s),\mathbf{w}_{[k,k+1]}^s,\mathbf{v}_{[k,k+1]}^s\right)
$$
with $\theta_{m}^s(m,\theta_m^s,X_m^s)=\theta_m^s$ and
where
$$
\mathsf{S}_{k,k+1}^s\left(
X_{k:k+1}^s,U_{k+1}^s,\theta_{k}^s(m,\theta_m^s,X_m^s),\mathbf{w}_{[k,k+1]}^s,\mathbf{v}_{[k,k+1]}^s\right)  =
$$
$$
\nabla \log(R_{\theta_{k}^s(m,\theta_m^s,X_m^s),k,k+1}^s\left(
C_{\theta_{k}^s(m,\theta_m^s,X_m^s),k,k+1}^s(x_{k}^s,\mathbf{w}_{[k,k+1]}^s,x_{k+1}^s)\right)) - 
$$
$$
\nabla \log(R_{\theta_{k}^s(m,\theta_m^s,X_m^s),k,k+1}^s\left(
C_{\theta_{k}^s(m,\theta_m^s,X_m^s),k,k+1}^s(x_{k}^s,\mathbf{v}_{[k,k+1]}^s,u_{k+1}^s)\right))
$$
and we use the $s$ subscripts for $X$ and $U$ to denote whether they have been generated by \textbf{MPD}
or the perfect version;  note that in the first phase,  of course,  the $X$ are the same in both \textbf{MPD} and the perfect one.  We note that $\theta_n(0,\theta_0,x_0)$ represents the perfect \textbf{MPD} recursion and
 $\theta_n^l(0,\theta_0,x_0)$ the \textbf{MPD} one can actually implement in practice (i.e.~up-to time discretization). 
Our proof will focus on the decomposition
\begin{equation}\label{eq:telescope_sum}
\theta_n(0,\theta_0,x_0) - \theta_n^l(0,\theta_0,x_0)  = \sum_{j=0}^{n-1}
\left\{
\theta_n\left(j,\theta_{j}^l(0,\theta_0,X_0),X_j^l\right) -
\theta_n\left(j+1,\theta_{j+1}^l(0,\theta_0,X_0),X_{j+1}^l)
\right)
\right\}.
\end{equation}

\subsection{Assumptions}\label{app:theory_ass}

Our objective is leverage the identity that is given in \eqref{eq:telescope_sum}.  To that end we can construct a process that is equal in distribution to the random variables that are given in \eqref{eq:telescope_sum}. The reason why this will be useful will be clarified once the process and assumptions are detailed.
We consider each summand of \eqref{eq:telescope_sum},  denote it $\mathcal{S}_j$,  individually and suppose that one can simulate each difference independently across the summands.  On inspection of the summand,  $\theta_n\left(j,\theta_{j}^l(0,\theta_0,X_0),X_j^l\right)$ 
(resp.~$\theta_n\left(j+1,\theta_{j+1}^l(0,\theta_0,X_0),X_{j+1}^l)\right)$
consists of running the \textbf{MPD} up-to iterative time $j$ (resp.~time $j+1$) and then generating the perfect \textbf{MPD} starting at $\theta_{j}^l(0,\theta_0,X_0)$ and $X_j^l$ at time $j$
(resp.~$\theta_{j+1}^l(0,\theta_0,X_0)$ and $X_{j+1}^l$ at time $j+1$)
 up-to time $n$ (resp.~$n$).   We suppose that we can construct a coupling of the laws associated to the sequences of random variables as just described,  with expectation operator $\check{\mathbb{E}}$,  such that the to be described assumptions are satisfied.  We discuss the extent to which these assumptions have previously been investigated,  below.  To minimize any notational issues,  we write, for example $X_j$, $X_j^l$ without reference to the index $j$ in $\mathcal{S}_j$,  but it should be understood that the assumption holds for every (independent) process $\mathcal{S}_j$.   The constants below are assumed $j$ independent.  To shorten the notation set
\begin{align*}
\mathcal{D}_j^l := &\mathsf{S}_{j,j+1}\left(X_j^l,X_{j+1},U_{j},\theta_j^l(0,\theta_0,X_0),\mathbf{W}_{[j,j+1]},\mathbf{V}_{j,j+1]}\right) - \\ & \mathsf{S}_{j,j+1}^l\left(X_{j:j+1}^l,U_{j}^l,
\theta_{j}^l(0,\theta_0,X_0),\mathbf{W}_{[j,j+1]}^l,\mathbf{V}_{j,j+1]}^l\right).
\end{align*}

\begin{hypA}\label{ass:1}
There exists a $C<+\infty$ such that for any $l\in\mathbb{N}$
$$
\max\left\{\check{\mathbb{E}}\left[\left\|\mathcal{D}_j^l\right\|^2\right],
\left|\check{\mathbb{E}}\left[\mathcal{D}_j^l\right]\right|,
\check{\mathbb{E}}\Big[\Big\|X_{j+1}-X_{j+1}^l\Big\|^2\Big]
\right\}
\leq C\Delta_l.
$$
\end{hypA}

\begin{hypA}\label{ass:2}
There exists a non-negative sequence $(\alpha_n)_{n\geq 0}$,  such that $\sup_{n\geq 1,j\geq 1, j<n}\sum_{i=j}^n \alpha_{n-i}<+\infty$ and $C<+\infty$ such that for any $(n,j,x,x',\theta,\theta')\in\mathbb{N}\times\{1,\dots,n-1\}\times\mathbb{R}^{2d}\times\Theta^2$
\begin{eqnarray*}
\check{\mathbb{E}}\left[\|\theta_n(j+1,\theta,x)-\theta_n(j+1,\theta',x')\|^2\right] & \leq & C\alpha_{n-j}\left\{
\|x-x'\|^2+\|\theta-\theta'\|^2
\right\} \\
|\check{\mathbb{E}}\left[\theta_n(j+1,\theta,x)-\theta_n(j+1,\theta',x')\right]| & \leq & C\alpha_{n-j}\left\{
|x-x'|+|\theta-\theta'|
\right\}.
\end{eqnarray*}
\end{hypA}

The assumptions deserve a discussion to understand their implication and whether they can be verified. 
(A\ref{ass:1}) concerns the accuracy of the time-discretization, interplayed with the updates of the parameter up-to time $j+1$.  In essence,  one considers the time-discretized updates of the parameter at a given $l$ up-to time $j$,  which could be identical for both  $\theta_n\left(j,\theta_{j}^l(0,\theta_0,X_0),X_j^l\right)$ and $\theta_n\left(j+1,\theta_{j+1}^l(0,\theta_0,X_0),X_{j+1}^l)
\right)$ (this would be a valid coupling) and then in one step $\theta_n\left(j,\theta_{j}^l(0,\theta_0,X_0),X_j^l\right)$ transitions to the perfect \textbf{MPD} and the other stays in the discretized one.  The assumption states that in this transition,  one can couple the simulation so that the standard Euler-Maruyama rate e.g.~\cite{kloeden} (which holds for sufficiently regular diffusions as assumed in this paper) is maintained.  This seems intutively possible and has been studied numerically in the context of diffusion bridges in \cite{bridges,non_synch} who develop MCMC methods to achieve these rates.  In those articles the technical barrier to proving these rates are discussed in details and we refer the reader to those papers.  (A\ref{ass:2}) is a type of forgetting property of the update of the parameters and the intermediate simulation,  when considering the perfect \textbf{MPD}.  Essentially,  it says that for long-time the expected square $L_2-$distance between two perfect \textbf{MPD} will disappear.  Normally these type of properties can be verified for Markov chains (of which our process is) with sufficiently fast mixing; see \cite{delmoral} for example.  Ultimately one would like to prove (A\ref{ass:1}-\ref{ass:2}) but we expect that this a particularly arduous challenge which is well out of the scope of the current article,  but despite this,  we believe that these assumptions can hold under appropriate conditions. 

\subsection{Proof of Theorem \ref{theo:main}}\label{app:theory_prf}

\begin{proof}
Throughout  $C$ is a constant that depends on $d_{\theta}$ only and whose value may change from line-to-line.
We begin by considering \eqref{eq:telescope_sum},  of which one has the bound:
\begin{equation}\label{eq:mainprf1}
\check{\mathbb{E}}\left[\|\sum_{j=0}^{n-1}\mathcal{S}_j\|^2\right] \leq 
\sum_{j=0}^{n-1}\check{\mathbb{E}}\left[\|\mathcal{S}_j\|^2\right] +
C\sum_{j<k}|\check{\mathbb{E}}\left[\mathcal{S}_j\right]||\check{\mathbb{E}}\left[\mathcal{S}_k\right]|
\end{equation}
where we have used the independence of $\mathcal{S}_j$ and $\mathcal{S}_k$, and we have used basic properties of $L_1-$norms.  We consider a summand of \eqref{eq:telescope_sum} which is equal to
\begin{equation}\label{eq:mainprf2}
\mathcal{S}_j = \sum_{i=1}^3 \mathcal{S}_j(i)
\end{equation}
where
\begin{eqnarray*}
\mathcal{S}_j(1) & = &
\theta_0 + \sum_{i=1}^{j} \gamma_i 
\mathsf{S}_{i-1,i}^l\left(
X_{i-1:i}^l,U_{i}^l,\theta_{i-1}^l(0,\theta_0,X_0),\mathbf{w}_{[i-1,i]}^l,\mathbf{v}_{[i-1,i]}^l\right) - \\
& &  
\left\{\theta_0 + \sum_{i=1}^{j} \gamma_i 
\mathsf{S}_{i-1,i}^l\left(
X_{i-1:i}^l,U_{i}^l,\theta_{i-1}^l(0,\theta_0,X_0),\mathbf{w}_{[i-1,i]}^l,\mathbf{v}_{[i-1,i]}^l\right) \right\}
\\
\mathcal{S}_j(2) & = &
\gamma_{j+1} 
\mathsf{S}_{j,j+1}\left(X_j^l,X_{j+1},U_{j},\theta_j^l(0,\theta_0,X_0),\mathbf{w}_{[j,j+1]},\mathbf{v}_{j,j+1]}\right) - \\
& & 
\gamma_{j+1} 
\mathsf{S}_{j,j+1}^l\left(X_{j:j+1}^l,U_{j}^l,
\theta_{j}^l(0,\theta_0,X_0),\mathbf{w}_{[j,j+1]}^l,\mathbf{v}_{j,j+1]}^l\right)
\\
\mathcal{S}_j(3) & = &
\sum_{i=j+2}^{n} \gamma_i 
\mathsf{S}_{i-1,i}\left(
X_{i-1:i},U_{i},\theta_{i-1}(j,\theta_j^l(0,\theta_0,X_0),X_j^l),\mathbf{w}_{[i-1,i]},\mathbf{v}_{[i-1,i]}\right) - \\
&  &\Big\{
\gamma_{j+2} 
\mathsf{S}_{j+1,j+2}\left(X_{j+1}^l,X_{j+2},U_{j+2},
\theta_{j+1}^l(0,\theta_0,X_0),\mathbf{w}_{[j+1,j+2]},\mathbf{v}_{j+1,j+2]}\right) \\
& &
\sum_{i=j+3}^{n} \gamma_i 
\mathsf{S}_{i-1,i}\left(
X_{i-1:i},U_{i},\theta_{i-1}(j+1,\theta_{j+1}^l(0,\theta_0,X_0),X_{j+1}^l),\mathbf{w}_{[i-1,i]},\mathbf{v}_{[i-1,i]}\right)
\Big\}
\end{eqnarray*}
where we use the convention that the sum over an empty set is zero 
and note that $\mathcal{S}_j(1)=0$;  it is just included for clarity of exposition.   As a result, to bound
\eqref{eq:mainprf1} we will use the decomposition that is given in \eqref{eq:mainprf2} which consistutes dealing with
terms of the type $\check{\mathbb{E}}\left[\|\mathcal{S}_j\|^2\right]$ and $|\check{\mathbb{E}}\left[\mathcal{S}_j\right]|$;  we begin with the former.

We have that 
$$
\check{\mathbb{E}}\left[\|\mathcal{S}_j\|^2\right] \leq C
\left(
\check{\mathbb{E}}\left[\|\mathcal{S}_j(2)\|^2\right] + \check{\mathbb{E}}\left[\|\mathcal{S}_j(3)\|^2\right]
\right).
$$
Then applying (A\ref{ass:1}) to the first term on the R.H.S.~and (A\ref{ass:2}) to the second term on the R.H.S.~we have that
$$
\check{\mathbb{E}}\left[\|\mathcal{S}_j\|^2\right] \leq C
\left(\gamma_{j+1}^2\Delta_l + \alpha_{n-j}\left\{
\check{\mathbb{E}}\left[\|X_{j+1}-X_{j+1}^l\|^2\right]
+\check{\mathbb{E}}\left[\|
\theta_{j+1}(j,\theta_j^l(0,\theta_0,X_0),X_{j+1}^l)-
\theta_{j+1}^l(0,\theta_0,X_0)\|^2\right]\right\}
\right).
$$
Then applying (A\ref{ass:1}) to $\check{\mathbb{E}}\left[\|X_{j+1}-X_{j+1}^l\|^2\right]$ we obtain
$$
\check{\mathbb{E}}\left[\|\mathcal{S}_j\|^2\right] \leq C
\left(\gamma_{j+1}^2\Delta_l + \alpha_{n-j}\left\{\Delta_l + \check{\mathbb{E}}\left[\|
\theta_{j+1}(j,\theta_j^l(0,\theta_0,X_0),X_{j+1}^l)-
\theta_{j+1}^l(0,\theta_0,X_0)\|^2\right]\right\}
\right)
$$
so to upper-bound $\check{\mathbb{E}}\left[\|\mathcal{S}_j\|^2\right]$ we now consider
$\check{\mathbb{E}}\left[\|
\theta_{j+1}(j,\theta_j^l(0,\theta_0,X_0),X_{j+1}^l)-
\theta_{j+1}^l(0,\theta_0,X_0)\|^2\right]$.  Using the iterative update formula of the \textbf{MPD} we
have
$$
\theta_{j+1}(j,\theta_j^l(0,\theta_0,X_0),X_{j+1}^l)-
\theta_{j+1}^l(0,\theta_0,X_0) = \gamma_{j+1}\mathcal{D}_j^l
$$
so that by using (A\ref{ass:1}) we have that 
\begin{equation}\label{eq:mainprf3}
\check{\mathbb{E}}\left[\|
\theta_{j+1}(j,\theta_j^l(0,\theta_0,X_0),X_{j+1}^l)-
\theta_{j+1}^l(0,\theta_0,X_0)\|^2\right] \leq 
C\gamma_{j+1}^2\Delta_l.
\end{equation}
Thus we have shown that
\begin{equation}\label{eq:mainprf4}
\check{\mathbb{E}}\left[\|\mathcal{S}_j\|^2\right] \leq C\Delta_l
\left(\gamma_{j+1}^2 + \alpha_{n-j}\left\{1 + 
\gamma_{j+1}^2\right\}\right).
\end{equation}

For $|\check{\mathbb{E}}\left[\mathcal{S}_j\right]|$,  we note that if $j\geq T$ this term is exactly zero by the martingale property,  so we suppose that $j<T$.  We have that 
$$
|\check{\mathbb{E}}\left[\mathcal{S}_j\right]|
 \leq C
\left(
|\check{\mathbb{E}}\left[\mathcal{S}_j(2)\right]| + |\check{\mathbb{E}}\left[\mathcal{S}_j(3)\right]|
\right).
$$
Then applying (A\ref{ass:1}) to the first term on the R.H.S.~and (A\ref{ass:2}) to the second term on the R.H.S.~we have that
$$
|\check{\mathbb{E}}\left[\mathcal{S}_j\right]|
 \leq C
\left(
\gamma_{j+1}\Delta_l + \alpha_{n-j}\left\{
\check{\mathbb{E}}\left[|X_{j+1}-X_{j+1}^l|\right]
+\check{\mathbb{E}}\left[|
\theta_{j+1}(j,\theta_j^l(0,\theta_0,X_0),X_{j+1}^l)-
\theta_{j+1}^l(0,\theta_0,X_0)|\right]\right\}
\right).
$$
Then as 
\begin{eqnarray*}
\check{\mathbb{E}}\left[|X_{j+1}-X_{j+1}^l|\right] & \leq & C\check{\mathbb{E}}\left[\|X_{j+1}-X_{j+1}^l\|^2\right]^{1/2}\\
& \leq & C\Delta_l^{1/2}
\end{eqnarray*}
where we have used (A\ref{ass:1}
and similarly
\begin{eqnarray*}
\check{\mathbb{E}}\left[|
\theta_{j+1}(j,\theta_j^l(0,\theta_0,X_0),X_{j+1}^l)-
\theta_{j+1}^l(0,\theta_0,X_0)|\right] & \leq & C
\check{\mathbb{E}}\left[\|
\theta_{j+1}(j,\theta_j^l(0,\theta_0,X_0),X_{j+1}^l)-
\theta_{j+1}^l(0,\theta_0,X_0)\|^{2}\right]^{1/2} \\
& \leq & C\gamma_{j+1}\Delta_l^{1/2}
\end{eqnarray*}
where we have used \eqref{eq:mainprf3},  one obtains
\begin{eqnarray}
|\check{\mathbb{E}}\left[\mathcal{S}_j\right]|
 & \leq & C
\left(\gamma_{j+1}\Delta_l + \alpha_{n-j}\left\{
\Delta_l^{1/2} + \gamma_{j+1}\Delta_l^{1/2}
\right\}
\right) \nonumber \\
 & \leq & C\Delta_l^{1/2}
\left(\gamma_{j+1} + \alpha_{n-j}\left\{
1 + \gamma_{j+1}
\right\}
\right).\label{eq:mainprf5}
\end{eqnarray}

Now,  combining \eqref{eq:mainprf1} with \eqref{eq:mainprf4} and \eqref{eq:mainprf5} one has that
$$
\check{\mathbb{E}}\left[\|\sum_{j=0}^{n-1}\mathcal{S}_j\|^2\right] \leq  
C\left(\sum_{j=0}^{n-1}\Delta_l
\left(\gamma_{j+1}^2 + \alpha_{n-j}\left\{1 + 
\gamma_{j+1}^2\right\}\right) +
\left\{\sum_{j=0}^{T-1}
\Delta_l^{1/2}
\left(\gamma_{j+1} + \alpha_{n-j}\left\{
1 + \gamma_{j+1}
\right\}
\right)
\right\}^2\right).
$$
As all the sums are upper-bounded uniformly in $n$ we have obtained that 
$$
\sup_{n\geq 1}\check{\mathbb{E}}\left[\left\|\theta_n^l-\theta_n^l\right\|^2\right] \leq C\Delta_l.
$$
as was to be proved.
\end{proof}

\end{document}